\documentclass[11pt]{article}

\usepackage[margin=1in]{geometry}
\usepackage{amsmath,amssymb,amsthm}
\usepackage{mathtools}
\usepackage{bm}
\usepackage[hidelinks]{hyperref}
\usepackage{enumitem}
\usepackage{microtype}
\usepackage{upquote}

\theoremstyle{plain}
\newtheorem{theorem}{Theorem}
\newtheorem{lemma}{Lemma}
\newtheorem{proposition}{Proposition}
\newtheorem{corollary}{Corollary}
\theoremstyle{definition}

\newtheorem{remark}{Remark}

\DeclareMathOperator{\sdc}{sdc}
\DeclareMathOperator{\dc}{dc}
\DeclareMathOperator{\rank}{rank}
\DeclareMathOperator{\adj}{adj}
\DeclareMathOperator{\tr}{tr}
\DeclareMathOperator{\Det}{Det}

\newcommand{\PP}{\mathbb{P}}
\newcommand{\CC}{\mathbb{C}}
\newcommand{\OO}{\mathcal O}
\newcommand{\Mcal}{\mathcal M}
\newcommand{\pdeg}{\delta}

\title{\bf A symmetric determinantal lower bound for diagonal power sums\\
via polar degree}

\author{%
  Karthik Sheshadri\thanks{Independent AI researcher and engineer, San Jose,
  California, USA. The author holds a PhD in computer science and has
  graduate-level training in mathematics, but is not a specialist in algebraic
  complexity or algebraic geometry; see the ``Status of this work'' note
  following the abstract. Correspondence:
  \texttt{karthiksheshadri217@gmail.com}. LinkedIn:
  \url{https://www.linkedin.com/in/karthik-sheshadri-0624ab150/}.}
  \\[4pt]
  \small Independent AI researcher and engineer, San Jose, California, USA
}

\date{}

\begin{document}
\maketitle

\begin{abstract}
The symmetric determinantal complexity $\sdc(f)$ of a polynomial $f$ is the
least $m$ such that $f=\Det(M)$ for an $m\times m$ symmetric matrix $M$ of
affine-linear forms.  We prove, over $\CC$, that
\[
  \sdc\!\left(\sum_{i=1}^n x_i^n\right)
  \ge \left(\frac{1}{2e}-o(1)\right)n^2 .
\]
The result is a symmetric companion to the author's non-symmetric polar-degree
preprint~\cite{SheshadriArxiv}.  The method parallels that work, but the proof
below is self-contained and redoes the load-bearing local incidence analysis in
the symmetric setting.  The general theorem is the following.  If
$X=V(f)\subset\PP^{N-1}$ is a smooth degree-$d$ hypersurface, $N\ge3$, and
$f=\Det(A_0+\sum_{i=1}^N x_iA_i)$ with all $A_i$ symmetric of size $m$, then
\[
  \pdeg_{\mathrm{top}}(X)=d(d-1)^{N-2}
  \le 2^{N-2}\binom{m}{N-1}.
\]
The proof uses the symmetric rank-one kernel incidence
$\Mcal(z,x)u=0$, where $\Mcal=zA_0+\sum_i x_iA_i$.  At a genuine polar point,
$\Mcal$ has rank $m-1$, and the symmetric local normal form
\[
  \Mcal=\begin{pmatrix}B&c\\ c^{\mathsf T}&s\end{pmatrix},\qquad
  \det B\in\OO^\times,
\]
eliminates the unique projective kernel line scheme-theoretically:
$u=(-B^{-1}c,1)$ and $\det\Mcal=(\det B)(s-c^{\mathsf T}B^{-1}c)$.  On this
local graph, $\adj(\Mcal)=(\det B)uu^{\mathsf T}$ along the determinant
hypersurface, so the lifted conormal forms $u^{\mathsf T}A_i u$ are a common
unit multiple of the ordinary partial derivatives $\partial_i f$.  Hence the
lifted polar equations cut the ordinary polar slice, up to units, and every
genuine lifted polar point is a zero-dimensional scheme-theoretic isolated
solution.  Multihomogeneous Bezout on
$\PP^N_{[z:x]}\times\PP^{m-1}_{[u]}$ then gives
\[
  [H^N U^{m-1}]\,H(H+U)^m(2U)^{N-2}
  =2^{N-2}\binom{m}{N-1}.
\]
For $F_n=\sum_i x_i^n$ this bounds $n(n-1)^{n-2}$ and yields the stated
constant $1/(2e)$.  More generally, for
$F_{N,d}=\sum_{i=1}^N x_i^d$ the same theorem gives
$\sdc(F_{N,d})\ge(1/(2e)-o_N(1))N(d-1)$ as $N\to\infty$, uniformly for
$d\ge2$.  We also give an explicit symmetric determinantal representation of
$F_{N,d}$ of size $2N(d+1)+1$, showing that the diagonal lower bounds are
non-vacuous and tight up to a constant factor.  The result is for exact
symmetric determinantal complexity in characteristic zero; it is not a
border-complexity statement and it is not a uniform positive-characteristic
theorem.
\end{abstract}

\noindent\textbf{Keywords:} symmetric determinantal complexity; polar degree;
Gauss map; dual variety; algebraic complexity; multihomogeneous Bezout;
Valiant's hypothesis.

\medskip
\noindent\textbf{Subject classification:} 68Q17; 14N05; 14Q20; 15A15.

\medskip
\noindent\textbf{Status of this work.}
Three points should be stated plainly.
\emph{(i) Provenance.}  This paper is a symmetric companion to the author's
non-symmetric polar-degree preprint~\cite{SheshadriArxiv}.  The method parallels
that preprint, but the proof below is self-contained: in particular, the
symmetric isolatedness step is reproved directly by a Schur-complement normal
form rather than imported from the non-symmetric argument.  The present statement
arose from a follow-up adversarial prompt asking whether the symmetric
specialization of the incidence argument is valid or whether the coincidence of
the left and right kernels introduces a new failure mode.  The prompt protocol is
described in Section~\ref{sec:methodology} and Appendix~\ref{app:prompt}.
\emph{(ii) What is proved.}  The main inequality, the isolatedness lemma, the
multidegree computation, and the $1/(2e)$ asymptotic extraction are proved in
full below.  The proof is not a heuristic specialization of the non-symmetric
case: the symmetric Schur-complement normal form is used directly to identify
the completed local incidence ring at a genuine polar point with the completed
local ring of the ordinary polar slice, up to units.
\emph{(iii) Scope.}  Everything is over $\CC$, or equivalently any algebraically
closed field of characteristic zero for the arguments used here.  The theorem is
false if stated uniformly over all fields of characteristic different from two;
see Remark~\ref{rem:positivechar}.

\section{Introduction}
\label{sec:intro}

\subsection{Symmetric determinantal complexity}

Let $f\in\CC[x_1,\ldots,x_N]$.  Its \emph{symmetric determinantal complexity},
denoted $\sdc(f)$, is the least integer $m$ for which there exist symmetric
matrices $A_0,A_1,\ldots,A_N\in\CC^{m\times m}$ such that
\[
  f(x)=\Det\!\left(A_0+\sum_{i=1}^N x_iA_i\right).
\]
This is a restriction of ordinary affine determinantal complexity $\dc(f)$, in
which the representing matrix need not be symmetric.  Determinantal complexity
is one of the central algebraic models in Valiant's program~\cite{Valiant79};
quadratic lower bounds for the permanent are known~\cite{MignonRessayre04}, but
superquadratic and superpolynomial lower bounds remain major open problems.

The ordinary determinant model has two kernel variables at a smooth point of
the determinant hypersurface: a left kernel line and a right kernel line.  The
symmetric model has only one.  At rank $m-1$ the left and right kernels coincide,
so the conormal direction of a symmetric determinant is controlled by quadratic
forms
\[
  u^{\mathsf T}A_i u
\]
in one projective kernel variable $[u]\in\PP^{m-1}$ rather than by bilinear
forms $u^{\mathsf T}A_i v$ in two projective kernel variables.  This reduction
is exactly what improves the Bezout constant from $1/(4e)$ in the ordinary
polar-degree count of the companion preprint~\cite{SheshadriArxiv} to
$1/(2e)$ here.

The improvement is model-specific.  Since $\sdc(f)\ge \dc(f)$, any lower bound
for $\dc(f)$ is automatically a lower bound for $\sdc(f)$, but the argument in
this paper is sharper because it uses the symmetric geometry directly.

\subsection{Main results}

The key geometric input is the top polar degree of a smooth hypersurface.  If
$X=V(f)\subset\PP^{N-1}$ is smooth of degree $d\ge2$, then the gradient map
\[
  \gamma_X:X\longrightarrow (\PP^{N-1})^\vee,
  \qquad
  [x]\longmapsto [\partial_1f(x):\cdots:\partial_N f(x)]
\]
is a morphism and $\gamma_X^*\OO(1)=\OO_X(d-1)$.  We define
$\pdeg_{\mathrm{top}}(X)$ as the length of the inverse image of a general
codimension-$(N-2)$ linear subspace of the dual projective space.  Equivalently,
for a smooth hypersurface,
\[
  \pdeg_{\mathrm{top}}(X)=\int_X c_1(\OO_X(d-1))^{N-2}=d(d-1)^{N-2}.
\]
This is also the classical dual-class degree in characteristic zero, but the
proof uses the polar intersection length, not any closedness property of dual
varieties.

\begin{theorem}[Symmetric polar-degree bound]
\label{thm:main}
Let $N\ge3$, and let $f\in\CC[x_1,\ldots,x_N]$ be homogeneous of degree
$d\ge2$ such that $X=V(f)\subset\PP^{N-1}$ is smooth.  Suppose
\[
  f(x)=\Det\!\left(A_0+\sum_{i=1}^N x_iA_i\right)
\]
for symmetric $m\times m$ complex matrices $A_i$.  Then
\[
  \pdeg_{\mathrm{top}}(X)=d(d-1)^{N-2}
  \le 2^{N-2}\binom{m}{N-1}.
\]
\end{theorem}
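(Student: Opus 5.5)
The argument is an incidence count: realize the top polar points of $X$ as isolated solutions of a multihomogeneous system on $\PP^N_{[z:x]}\times\PP^{m-1}_{[u]}$, then bound their number by Bezout.

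\emph{Setup.} Homogenize the representation: put $\Mcal(z,x)=zA_0+\sum_i x_iA_i$, so that $\det\Mcal(z,x)=z^{m-d}\cdot(\text{homogenization of }f)$. Since $f$ is homogeneous of degree $d$, the affine identity $f=\Det(A_0+\sum x_iA_i)$ restricted to $z=1$ is what we use. Fix a general codimension-$(N-2)$ linear subspace $L\subset(\PP^{N-1})^\vee$. Write it as the span of $N-1$ general points $\ell_1,\ldots,\ell_{N-1}$. Then $\gamma_X(x)\in L$ is equivalent to $\nabla f(x)$ lying in the span of the $\ell_j$. Dually, choose $N-2$ general linear functionals $\lambda_1,\dots,\lambda_{N-2}$ on $\CC^N$ cutting out $L$, so that the condition reads $\lambda_k(\nabla f(x))=0$ for $k=1,\dots,N-2$. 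By smoothness of $X$ and Bertini/generic finiteness of $\gamma_X$ in characteristic zero, the preimage consists of exactly $d(d-1)^{N-2}$ reduced points, none lying in any prescribed proper closed subset. In particular, we may take these points off $\{z=0\}$-related degeneracies and where $\operatorname{rank}\Mcal=m-1$. The rank claim holds because $\nabla f\neq0$ on $X$, and at rank $\le m-2$ the adjugate vanishes, forcing $\nabla\det\Mcal=0$.

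\emph{Incidence.} On $\PP^N\times\PP^{m-1}$ consider the equations
\begin{align*}
&\Mcal(z,x)u=0 &&(m \text{ equations of bidegree }(1,1)),\\
&\textstyle\sum_i\lambda_{k,i}\,u^{\mathsf T}A_iu=0 &&(N-2 \text{ equations of bidegree }(0,2)),
\end{align*}
together with one general hyperplane in $[z:x]$ of bidegree $(1,0)$, used to pass from the affine cone to projective points. The key local step is the Schur-complement normal form from the abstract. At a genuine polar point, $\Mcal$ has rank $m-1$. After a change of basis $P^{\mathsf T}\Mcal P$ (which preserves symmetry), we can write $\Mcal=\begin{pmatrix}B&c\\c^{\mathsf T}&s\end{pmatrix}$ with $\det B$ a unit. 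The equation $\Mcal u=0$ then forces $u=(-B^{-1}c,1)$ together with the single equation $s-c^{\mathsf T}B^{-1}c=0$, which is $\det\Mcal$ up to a unit. On this locus $\adj\Mcal=(\det B)uu^{\mathsf T}$, so by Jacobi's formula $\partial_i\det\Mcal=\tr(\adj\Mcal\,A_i)=(\det B)\,u^{\mathsf T}A_iu$. Hence the local incidence ring is isomorphic to the local ring of the ordinary polar slice $\{f=0,\ \lambda_k(\nabla f)=0\}$. The latter is reduced at a point, so each genuine polar point contributes an isolated solution of multiplicity one.

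\emph{Bezout.} Multihomogeneous Bezout gives an upper bound for the number of isolated solutions, even when excess components exist; I would use the refined (Fulton–Vogel style) version that bounds the isolated-point contribution by the mixed degree. The only subtlety is that $m+N-1$ equations live on a space of dimension $N+m-1$. The bound is the coefficient
\[
[H^NU^{m-1}]\,H(H+U)^m(2U)^{N-2}=2^{N-2}\binom{m}{N-1},
\]
since we need $N-1$ factors of $H$ from $(H+U)^m$. This yields $d(d-1)^{N-2}\le 2^{N-2}\binom m{N-1}$.

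\emph{Main obstacle.} I expect the main difficulty to be ensuring that the genuine polar points are isolated in the full incidence scheme, rather than lying on positive-dimensional excess components. Such components could arise, for example, from points with $z=0$ or from higher-corank loci. The Schur normal form settles this locally, because the completed local ring is zero-dimensional. What remains is to justify applying Bezout to isolated points alone, and I would cite the refined intersection inequality for that.
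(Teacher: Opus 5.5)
Your proposal is correct and follows essentially the same route as the paper. It uses the same square incidence $\Mcal u=0$, $q_j(u^{\mathsf T}A_iu)=0$ plus one slice on $\PP^N\times\PP^{m-1}$; isolatedness of each genuine lift via the symmetric Schur-complement normal form (kernel graph $u=(-B^{-1}c,1)$, $\adj\Mcal=(\det B)uu^{\mathsf T}$ along $g=0$, so the lifted polar equations are unit multiples of $q_j(\nabla f)$); and the refined multihomogeneous Bezout inequality with coefficient $2^{N-2}\binom{m}{N-1}$. The only slip is cosmetic: a codimension-$(N-2)$ subspace of $(\PP^{N-1})^\vee$ is a line, spanned by two points rather than $N-1$, but the description by $N-2$ functionals that you actually use is correct, and your general hyperplane slice does the same job as the paper's $h=\ell(x)-z$.
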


\begin{corollary}[Degree-$n$ diagonal power sums]
\label{cor:fermat}
For $F_n=x_1^n+\cdots+x_n^n$ over $\CC$,
\[
  \sdc(F_n)\ge \left(\frac{1}{2e}-o(1)\right)n^2.
\]
\end{corollary}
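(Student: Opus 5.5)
The statement to prove is Corollary~\ref{cor:fermat}. The plan is to apply Theorem~\ref{thm:main} with $N=d=n$ and then extract the asymptotics.

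\textbf{Step 1: check the hypotheses.} The Fermat hypersurface $X=V(F_n)\subset\PP^{n-1}$ is smooth over $\CC$ for $n\ge 3$. Indeed, $\partial_iF_n=nx_i^{n-1}$, and these vanish simultaneously only at $x=0$.

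\textbf{Step 2: apply the theorem.} If $m=\sdc(F_n)$, Theorem~\ref{thm:main} gives
\[
  n(n-1)^{n-2}\le 2^{n-2}\binom{m}{n-1}.
\]
If $m<n-1$ the right side is zero, a contradiction. So $m\ge n-1$, and we may use $\binom{m}{n-1}\le m^{n-1}/(n-1)!$. This yields
\[
  m^{n-1}\ge \frac{n(n-1)^{n-2}(n-1)!}{2^{n-2}}.
\]

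\textbf{Step 3: take $(n-1)$-th roots.} By Stirling, $(n-1)!^{1/(n-1)}=(1+o(1))(n-1)/e$. The remaining factors behave as follows:
\[
  (n-1)^{(n-2)/(n-1)}=(1+o(1))\,n,\qquad n^{1/(n-1)}\to1,\qquad 2^{-(n-2)/(n-1)}\to \tfrac12 .
\]
Combining these gives
\[
  m\ge (1+o(1))\,\frac{n\cdot n}{2e}=\left(\frac1{2e}-o(1)\right)n^2.
\]

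\textbf{Main obstacle.} Only bookkeeping is delicate here: making sure the $o(1)$ terms are genuinely uniform. Using the crude bound $\binom{m}{k}\le m^k/k!$ loses nothing at leading order. This is because the resulting $m\sim n^2/(2e)$ is much larger than $k=n-1$, so $\binom{m}{k}\approx m^k/k!$ anyway.
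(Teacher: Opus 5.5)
Your proposal is correct and follows the paper's proof: apply Theorem~\ref{thm:main} with $N=d=n$, bound $\binom{m}{n-1}\le m^{n-1}/(n-1)!$, and take $(n-1)$-th roots using Stirling. The paper does the same computation by first proving Corollary~\ref{cor:diagonal-general} and then setting $N=d=n$; your case split on $m<n-1$ is harmless but unnecessary, since $\binom{m}{k}\le m^k/k!$ holds for every $m\ge 0$.
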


\begin{corollary}[General diagonal power sums]
\label{cor:diagonal-general}
Let
\[
  F_{N,d}=x_1^d+\cdots+x_N^d
\]
over $\CC$, with $N\ge3$ and $d\ge2$.  Then
\[
  d(d-1)^{N-2}
  \le 2^{N-2}\binom{\sdc(F_{N,d})}{N-1},
\]
and hence
\[
  \sdc(F_{N,d})
  \ge
  \left(
  \frac{d(d-1)^{N-2}(N-1)!}{2^{N-2}}
  \right)^{1/(N-1)}.
\]
In particular, uniformly for $d\ge2$ as $N\to\infty$,
\[
  \sdc(F_{N,d})
  \ge \left(\frac{1}{2e}-o_N(1)\right)N(d-1).
\]
\end{corollary}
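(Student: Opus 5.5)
The statement to prove is Corollary~\ref{cor:diagonal-general}. The plan is to apply Theorem~\ref{thm:main} directly to $f=F_{N,d}$ and then carry out an elementary asymptotic extraction.

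\textbf{Step 1: the hypotheses of Theorem~\ref{thm:main} hold.} Over $\CC$ we have $\partial_i F_{N,d}=d\,x_i^{d-1}$, and these vanish simultaneously only at $x=0$. Hence $X=V(F_{N,d})\subset\PP^{N-1}$ is smooth for every $d\ge2$, since $d\neq0$ in characteristic zero. Taking $m=\sdc(F_{N,d})$ and an optimal symmetric representation, Theorem~\ref{thm:main} gives $d(d-1)^{N-2}\le 2^{N-2}\binom{m}{N-1}$. If $m<N-1$ the right side is $0$ while the left side is positive, which is impossible. So $m\ge N-1$, and no degenerate case remains.

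\textbf{Step 2: the explicit bound.} Use $\binom{m}{N-1}\le m^{N-1}/(N-1)!$. This gives
\[
m^{N-1}\ge \frac{d(d-1)^{N-2}(N-1)!}{2^{N-2}},
\]
and taking $(N-1)$-th roots yields the displayed inequality.

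\textbf{Step 3: asymptotics.} Write the right-hand side as
\[
\frac{(d-1)}{2}\Bigl(\frac{2d}{d-1}\Bigr)^{1/(N-1)}\bigl((N-1)!\bigr)^{1/(N-1)}.
\]
Since $d/(d-1)\ge1$, the middle factor is at least $1$, uniformly in $d\ge2$. Stirling's formula, in the form $k!\ge (k/e)^k$, gives $((N-1)!)^{1/(N-1)}\ge (N-1)/e$. Together these give
\[
\sdc(F_{N,d})\ge \frac{(d-1)(N-1)}{2e}=\Bigl(\frac1{2e}-\frac{1}{2eN}\Bigr)N(d-1).
\]
The error term $1/(2eN)$ is independent of $d$, which is exactly the required uniformity in $d\ge2$.

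\textbf{Main obstacle.} The only delicate point is the uniformity in $d$. Step~3 handles it by keeping the $d/(d-1)$ factor in the favorable direction rather than bounding it above, so no step is a genuine obstacle beyond invoking Theorem~\ref{thm:main}.
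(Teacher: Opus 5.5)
Your proposal is correct and follows the paper's proof: apply Theorem~\ref{thm:main} to the smooth hypersurface $V(F_{N,d})$, use $\binom{m}{N-1}\le m^{N-1}/(N-1)!$, and take $(N-1)$-th roots. The only difference is in the last step. The paper uses the asymptotic form of Stirling and shows each factor is $1+o_N(1)$. You instead use $k!\ge(k/e)^k$ and discard the factor $(2d/(d-1))^{1/(N-1)}\ge 1$, which yields the explicit, $d$-uniform error term $1/(2eN)$.
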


The $n$-variable degree-$n$ case is stated separately because it is the most
compact asymptotic form and parallels the companion non-symmetric result.

\subsection{Why the symmetric isolatedness step is the whole proof}

The intended incidence count is visually simple.  Homogenize the representing
matrix:
\[
  \Mcal(z,x)=zA_0+\sum_{i=1}^N x_iA_i,
  \qquad \Det\Mcal=z^{m-d}f(x).
\]
A genuine polar point of $X$ is lifted to a point
$([z:x],[u])\in\PP^N\times\PP^{m-1}$ satisfying
\[
  \Mcal(z,x)u=0,
  \qquad
  q_j(u^{\mathsf T}A_1u,\ldots,u^{\mathsf T}A_Nu)=0,
  \qquad
  h(z,x)=0.
\]
The $m$ kernel equations have class $H+U$, the $N-2$ polar equations have class
$2U$, and the slice has class $H$.  Thus a square multihomogeneous Bezout count
would give
\[
  [H^N U^{m-1}]\,H(H+U)^m(2U)^{N-2}
  =2^{N-2}\binom{m}{N-1}.
\]

But Bezout bounds only isolated zero-dimensional contributions.  The
load-bearing issue is therefore not the coefficient extraction; it is whether a
genuine lifted polar point can lie on a positive-dimensional component of the
symmetric incidence scheme.  In particular, one must rule out the possibility
that the kernel equation acquires extra fiber dimension near a rank-drop point,
or that the quadratic forms $u^{\mathsf T}A_i u$ vanish in a way that creates a
spurious component through a genuine point.  Section~\ref{sec:local} proves
precisely this local isolatedness statement by a symmetric Schur-complement
normal form.  This is the point at which the symmetry constraint must be checked
rather than assumed.

\section{Preliminaries}
\label{sec:prelim}

We work over $\CC$.  Let
\[
  M(x)=A_0+\sum_{i=1}^N x_iA_i
\]
with all $A_i$ symmetric.  Its homogenization is
\[
  \Mcal(z,x)=zA_0+\sum_{i=1}^N x_iA_i.
\]
Since $\Det M(x)=f(x)$ is homogeneous of degree $d$, one has
\begin{equation}
\label{eq:homdet}
  \Det\Mcal(z,x)=z^{m-d}f(x).
\end{equation}
Here necessarily $m\ge d$ unless $f=0$, because the determinant of an affine
$m\times m$ matrix has degree at most $m$.

We use the identity
\begin{equation}
\label{eq:detderiv}
  \frac{\partial}{\partial x_i}\Det\Mcal
  =\tr\!\big(\adj(\Mcal)A_i\big).
\end{equation}
At a symmetric rank-$(m-1)$ matrix, the kernel is a line and the adjugate is a
nonzero symmetric rank-one matrix with image equal to that kernel line.

\begin{lemma}[Rank at genuine smooth points]
\label{lem:rank}
Let $[z_0:x_0]\in\PP^N$ satisfy $z_0\ne0$ and $f(x_0)=0$, with
$[x_0]\in X$ smooth.  Then
\[
  \rank\Mcal(z_0,x_0)=m-1.
\]
Consequently the projective kernel line $[u_0]\in\PP^{m-1}$ is unique.
\end{lemma}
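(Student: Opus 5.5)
The plan is to argue by contradiction through the adjugate, using the derivative identity \eqref{eq:detderiv} and the homogenized determinant identity \eqref{eq:homdet}. Symmetry of the $A_i$ plays no role in this step; only the rank behaviour of the adjugate matters.

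First, normalize. Since $z_0\ne0$ and everything is homogeneous, rescale so that $z_0=1$. Then $\Mcal(1,x_0)=M(x_0)$. The point $[x_0]$ lies in $X$, so $x_0\ne0$ and $f(x_0)=0$. By \eqref{eq:homdet}, $\Det\Mcal(1,x_0)=f(x_0)=0$, which gives the upper bound $\rank\Mcal(z_0,x_0)\le m-1$.

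For the lower bound, suppose instead that $\rank\Mcal(1,x_0)\le m-2$. Then every $(m-1)\times(m-1)$ minor vanishes, so $\adj(\Mcal(1,x_0))=0$. By \eqref{eq:detderiv},
\[
  \frac{\partial}{\partial x_i}\Det\Mcal\Big|_{(1,x_0)}=\tr\!\big(\adj(\Mcal(1,x_0))A_i\big)=0
  \qquad\text{for all } i.
\]
On the other hand, differentiating \eqref{eq:homdet} in $x_i$ (the factor $z^{m-d}$ does not depend on $x$) and evaluating at $z=1$ gives
\[
  \frac{\partial}{\partial x_i}\Det\Mcal\Big|_{(1,x_0)}=\partial_i f(x_0).
\]
Hence $\nabla f(x_0)=0$.

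The only point requiring care is converting this into a contradiction with smoothness. For a hypersurface $X=V(f)$ defined by the homogeneous polynomial $f$, smoothness at $[x_0]$ means exactly that the partial derivatives $\partial_i f$ do not all vanish at $x_0$. Implicit here is that $f$ is taken to be the defining equation; a square factor would make $X$ singular everywhere along that component. So $\nabla f(x_0)=0$ contradicts the hypothesis, and therefore $\rank\Mcal(z_0,x_0)=m-1$.

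Finally, a rank-$(m-1)$ matrix has a one-dimensional kernel. Its projectivization is therefore a single point $[u_0]\in\PP^{m-1}$, which gives the uniqueness claim.
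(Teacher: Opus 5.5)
Your proposal is correct and follows the paper's proof: you use the adjugate's vanishing at rank $\le m-2$, Jacobi's formula \eqref{eq:detderiv}, and \eqref{eq:homdet} to force $\nabla f(x_0)=0$, contradicting smoothness. The only cosmetic difference is that you rescale to $z_0=1$, whereas the paper keeps the nonzero factor $z_0^{m-d}$.
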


\begin{proof}
Since $\Det\Mcal(z_0,x_0)=0$, the rank is at most $m-1$.  If the rank were at
most $m-2$, then all $(m-1)\times(m-1)$ cofactors would vanish, so
$\adj(\Mcal(z_0,x_0))=0$.  By \eqref{eq:detderiv}, every partial derivative of
$\Det\Mcal$ with respect to the $x_i$ would vanish at $[z_0:x_0]$.  But by
\eqref{eq:homdet}, and because $z_0\ne0$,
\[
  \frac{\partial}{\partial x_i}\Det\Mcal(z_0,x_0)
  =z_0^{m-d}\partial_i f(x_0).
\]
The point $[x_0]\in X$ is smooth, so not all $\partial_i f(x_0)$ vanish.  This
contradiction proves the rank is $m-1$.  A rank-$(m-1)$ matrix has one-dimensional
kernel.
\end{proof}

\begin{lemma}[Symmetric conormal identity]
\label{lem:conormal}
At a point as in Lemma~\ref{lem:rank}, let $0\ne u_0\in\ker\Mcal(z_0,x_0)$.  Then
there is a scalar $\alpha\ne0$ such that
\[
  \partial_i f(x_0)=\alpha\,u_0^{\mathsf T}A_i u_0
  \qquad (1\le i\le N).
\]
In particular,
\[
  [\partial_1f(x_0):\cdots:\partial_Nf(x_0)]
  =[u_0^{\mathsf T}A_1u_0:\cdots:u_0^{\mathsf T}A_Nu_0].
\]
\end{lemma}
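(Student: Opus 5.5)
The plan is to combine the derivative identity \eqref{eq:detderiv} with the structure of the adjugate of a symmetric rank-$(m-1)$ matrix. By Lemma~\ref{lem:rank}, $\Mcal_0:=\Mcal(z_0,x_0)$ has rank exactly $m-1$, so $\adj(\Mcal_0)$ is nonzero of rank one. I will first pin down the form of this rank-one matrix, and then substitute it into the trace formula.

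\emph{Step 1: the adjugate.} From $\Mcal_0\,\adj(\Mcal_0)=\det(\Mcal_0)I=0$, every column of $\adj(\Mcal_0)$ lies in $\ker\Mcal_0=\CC u_0$. Since $\Mcal_0$ is symmetric, its cofactor matrix is symmetric too, so $\adj(\Mcal_0)$ is symmetric. A nonzero symmetric matrix whose columns all lie in $\CC u_0$ must equal $\beta\,u_0u_0^{\mathsf T}$ for some $\beta\neq0$. Indeed, writing $\adj(\Mcal_0)=u_0w^{\mathsf T}$, symmetry gives $u_0w^{\mathsf T}=wu_0^{\mathsf T}$, which forces $w\in\CC u_0$.

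\emph{Step 2: substitution.} By \eqref{eq:detderiv},
\[
\frac{\partial}{\partial x_i}\Det\Mcal\Big|_{(z_0,x_0)}=\tr\!\big(\beta u_0u_0^{\mathsf T}A_i\big)=\beta\,u_0^{\mathsf T}A_iu_0 .
\]
By \eqref{eq:homdet}, the left side equals $z_0^{m-d}\partial_if(x_0)$. Since $z_0\neq0$, we may set $\alpha=\beta z_0^{d-m}\neq0$, which gives $\partial_if(x_0)=\alpha\,u_0^{\mathsf T}A_iu_0$ for all $i$.

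\emph{Step 3: projective equality.} Smoothness of $[x_0]\in X$ means not all $\partial_if(x_0)$ vanish, so the vector $(u_0^{\mathsf T}A_iu_0)_i$ is nonzero and defines the same projective point. Rescaling $u_0$ changes $\alpha$ only by a nonzero square factor, so the statement does not depend on the choice of $u_0$.

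The only step requiring care is Step 1, namely the claim that the image of $\adj(\Mcal_0)$ is the kernel and that symmetry forces the form $\beta u_0u_0^{\mathsf T}$. This is exactly where the coincidence of the left and right kernels enters. In the non-symmetric case one would instead obtain $v_0u_0^{\mathsf T}$ with $v_0$ in the left kernel.
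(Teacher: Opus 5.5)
Your proof is correct and follows the paper's route. You show $\adj(\Mcal(z_0,x_0))=\beta u_0u_0^{\mathsf T}$ with $\beta\ne0$, substitute this into the trace formula for $\partial_{x_i}\Det\Mcal$, compare with $z_0^{m-d}\partial_i f(x_0)$, and take $\alpha=\beta z_0^{d-m}$. The one addition is that you justify the rank-one symmetric form of the adjugate, via columns lying in the kernel plus symmetry of the cofactor matrix, which the paper simply asserts.
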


\begin{proof}
At a symmetric rank-$(m-1)$ matrix, the adjugate is a nonzero scalar multiple of
$u_0u_0^{\mathsf T}$.  Thus
\[
  \adj(\Mcal(z_0,x_0))=\beta u_0u_0^{\mathsf T}
\]
for some $\beta\ne0$.  By \eqref{eq:detderiv},
\[
  \frac{\partial}{\partial x_i}\Det\Mcal(z_0,x_0)
  =\beta\,u_0^{\mathsf T}A_i u_0.
\]
Using \eqref{eq:homdet} and $z_0\ne0$ gives
$z_0^{m-d}\partial_i f(x_0)=\beta u_0^{\mathsf T}A_i u_0$.  Take
$\alpha=\beta z_0^{d-m}$.
\end{proof}

\section{The symmetric local normal form}
\label{sec:local}

This section proves the isolatedness statement needed for Bezout.  We state it
in the form used later.

Choose general linear forms $q_1,\ldots,q_{N-2}$ on the dual coordinates so that
\[
  P_q=\{[x]\in X:q_j(\partial_1f(x),\ldots,\partial_Nf(x))=0
  \text{ for }1\le j\le N-2\}
\]
is a reduced zero-dimensional scheme of length $\pdeg_{\mathrm{top}}(X)$.  This
is possible because the partial derivatives define a base-point-free linear
system $|\OO_X(d-1)|$ on the smooth variety $X$; iterated Bertini gives a
reduced complete intersection of degree $d(d-1)^{N-2}$.

Choose a linear form $\ell(x)$ nonzero on the finite set $P_q$ and set
\[
  h(z,x)=\ell(x)-z.
\]
Each $[\xi]\in P_q$ determines the affine-slice representative
$[z:x]=[\ell(\xi):\xi]$, which has $z\ne0$.

\begin{lemma}[Symmetric isolatedness]
\label{lem:isolated}
Let $p=([z_0:x_0],[u_0])$ be the lift of a point of $P_q$ to the symmetric
incidence scheme
\begin{equation}
\label{eq:incidence}
  \Mcal(z,x)u=0,
  \qquad
  q_j(u^{\mathsf T}A_1u,\ldots,u^{\mathsf T}A_Nu)=0\ (1\le j\le N-2),
  \qquad
  h(z,x)=0
\end{equation}
in $\PP^N_{[z:x]}\times\PP^{m-1}_{[u]}$.  Then $p$ is a zero-dimensional
scheme-theoretically isolated solution of \eqref{eq:incidence}.
\end{lemma}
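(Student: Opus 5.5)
We work in the completed local ring of $\PP^N\times\PP^{m-1}$ at $p$ and show that the ideal of \eqref{eq:incidence} there equals, up to units and a change of coordinates, the ideal of the reduced point $[x_0]\in P_q$. Throughout we use affine charts. Since $z_0\ne0$ at $p$ (by construction of $h$), we may dehomogenize by $z=1$. The slice equation $h=0$ becomes $\ell(x)=1$, which together with the projective scaling of $x$ is exactly the affine chart of $\PP^{N-1}$ through $[x_0]$. So local coordinates on the first factor are $N-1$ coordinates $y$ on $\PP^{N-1}$ near $[x_0]$.

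\textbf{Normal form.} By Lemma~\ref{lem:rank}, $\Mcal(1,x_0)$ is symmetric of rank $m-1$. After a congruence $P^{\mathsf T}\Mcal P$ by a constant invertible $P$, which preserves symmetry and replaces $u$ by $P^{-1}u$ and $A_i$ by $P^{\mathsf T}A_iP$ (leaving $u^{\mathsf T}A_iu$ invariant), we may assume the block form with $B(x_0)$ invertible and $u_0=e_m$. Near $p$ we use the chart $u_m=1$, write $u=(w,1)$, and write
\[
  \Mcal=\begin{pmatrix}B&c\\ c^{\mathsf T}&s\end{pmatrix}
\]
with $\det B$ a unit. The first $m-1$ kernel equations read $Bw+c=0$. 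These solve scheme-theoretically as $w=-B^{-1}c$, eliminating the $u$-variables. The last equation becomes $c^{\mathsf T}w+s=s-c^{\mathsf T}B^{-1}c$, which is $\det\Mcal/\det B=f\cdot(\text{unit})$ by \eqref{eq:homdet}. So the kernel equations cut out exactly the graph of $u(x)=(-B^{-1}c,1)$ over the local hypersurface $V(f)$.

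\textbf{Conormal forms.} On this graph we compute $\adj\Mcal$. The identity $\Mcal\,\adj\Mcal=\det\Mcal\cdot I$ holds, and modulo $f$ we have $\det\Mcal\equiv0$. Using the Schur complement we verify $\adj(\Mcal)\equiv(\det B)\,uu^{\mathsf T}$ modulo $f$; this is a direct block computation, checked as an identity of formal matrices. Then \eqref{eq:detderiv} gives
\[
  \partial_i f=(\det B)\,u^{\mathsf T}A_iu \pmod{f},
\]
after adjusting the factor $z^{m-d}=1$. Hence each $q_j(u^{\mathsf T}A u)$ equals a common unit times $q_j(\nabla f)$ modulo $f$. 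The quotient of the completed local ring by the incidence ideal is therefore isomorphic to the completed local ring of $P_q$ at $[x_0]$, which is $\CC$ because $P_q$ is reduced. So $p$ is an isolated point of multiplicity one, which is stronger than needed.

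\textbf{Main obstacle.} The delicate step is the adjugate identity modulo $f$, not merely at the point $p$. We need $\adj\Mcal-(\det B)uu^{\mathsf T}\in(f)$ entrywise in the local ring, since a pointwise equality would not identify ideals. We plan to write $\Mcal=L^{\mathsf T}\,\mathrm{diag}(B,\sigma)\,L$ with $\sigma=s-c^{\mathsf T}B^{-1}c$ and $L$ unipotent. Then $\adj\Mcal=\adj(L)\,\mathrm{diag}(\sigma\adj B,\det B)\,\adj(L)^{\mathsf T}$. Modulo $\sigma$, which is $f$ times a unit, only the $\det B$ entry survives, and $\adj(L)e_m=L^{-1}e_m=u$. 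A second subtlety is that the $N$ partials restricted to the hypersurface must be compared in the chart $z=1$ with $\ell=1$. This is harmless because the $q_j$ are linear in the gradient and the common unit factors out.
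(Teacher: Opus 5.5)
Your proposal is correct and follows essentially the same route as the paper. The shared steps are the congruence to the symmetric Schur normal form, the scheme-theoretic elimination of the kernel as the graph $u=(-B^{-1}c,1)$ over the Schur complement, the identity $\adj\Mcal\equiv(\det B)\,uu^{\mathsf T}$ on that graph, and the resulting unit comparison of $q_j(u^{\mathsf T}A_1u,\ldots,u^{\mathsf T}A_Nu)$ with $q_j(\nabla f)$. Two of your details improve on the paper's write-up. The factorization $\Mcal=L^{\mathsf T}\operatorname{diag}(B,\sigma)L$ proves the adjugate congruence as an identity in the local ring, whereas the paper argues pointwise on the Schur-complement hypersurface and so implicitly relies on that hypersurface being reduced. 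Identifying $\{\ell=1\}$ with an affine chart of $\PP^{N-1}$ sharpens the paper's finite-length conclusion to length exactly one.
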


\begin{proof}
By Lemma~\ref{lem:rank}, $\Mcal(z_0,x_0)$ has rank $m-1$.  Since it is
symmetric, its kernel is a one-dimensional radical.  After a constant congruence
change of basis $\Mcal\mapsto P^{\mathsf T}\Mcal P$, which gives an
isomorphic local incidence problem via $u=P u'$ and preserves the quadratic
forms since ${u'}^{\mathsf T}(P^{\mathsf T}A_iP)u'=u^{\mathsf T}A_i u$, we
may assume that the kernel line at $p$ is spanned by the last coordinate vector.
The determinant is multiplied only by the nonzero scalar $(\det P)^2$, which is
harmless in all local unit comparisons below.  Because the kernel is the radical
of the symmetric bilinear form, the restriction to any complementary
$(m-1)$-dimensional subspace is nondegenerate.  Thus, in a Zariski neighborhood
of $[z_0:x_0]$, write
\[
  \Mcal=\begin{pmatrix}B&c\\ c^{\mathsf T}&s\end{pmatrix},
  \qquad \det B\in\OO^\times.
\]
Work on the affine chart $u_m=1$ and write $u=(u',1)^{\mathsf T}$.  The kernel
equations are
\[
  Bu'+c=0,
  \qquad
  c^{\mathsf T}u'+s=0.
\]
Since $B$ is invertible in the local ring, the first $m-1$ equations are
equivalent to
\[
  u'=-B^{-1}c.
\]
With
\[
  g=s-c^{\mathsf T}B^{-1}c,
\]
the last equation becomes
\[
  c^{\mathsf T}u'+s
  =c^{\mathsf T}(u'+B^{-1}c)+g.
\]
Therefore the ideal generated by $\Mcal u$ is exactly
\[
  \big(u'+B^{-1}c,\ g\big)
\]
in the local ring.  Scheme-theoretically, the kernel incidence is the graph
\[
  u=\nu:=\begin{pmatrix}-B^{-1}c\\1\end{pmatrix}
\]
over the hypersurface $g=0$.  The determinant is
\begin{equation}
\label{eq:schurdet}
  \Det\Mcal=(\det B)g.
\end{equation}

On $g=0$, the matrix $\Mcal$ has kernel spanned by $\nu$ and rank $m-1$ because
$B$ remains invertible.  Its adjugate is therefore a scalar multiple of
$\nu\nu^{\mathsf T}$.  The lower-right cofactor equals $\det B$, while the
lower-right entry of $\nu\nu^{\mathsf T}$ is $1$, so
\begin{equation}
\label{eq:adjlocal}
  \adj(\Mcal)=(\det B)\nu\nu^{\mathsf T}
  \qquad\text{on }g=0.
\end{equation}
Hence, in the local quotient by the kernel equations,
\[
  \frac{\partial}{\partial x_i}\Det\Mcal
  =(\det B)\nu^{\mathsf T}A_i\nu.
\]
Using \eqref{eq:homdet}, we get
\begin{equation}
\label{eq:unitpolar}
  \nu^{\mathsf T}A_i\nu
  =\frac{z^{m-d}}{\det B}\,\partial_i f(x)
  \qquad\text{on the local kernel graph.}
\end{equation}
The factor $z^{m-d}/\det B$ is a unit near $p$, since $z_0\ne0$ and
$\det B$ is a unit.

It follows that, after eliminating the kernel variables, each lifted polar
equation
\[
  q_j(\nu^{\mathsf T}A_1\nu,\ldots,\nu^{\mathsf T}A_N\nu)=0
\]
is the ordinary polar equation
\[
  q_j(\partial_1f,\ldots,\partial_Nf)=0
\]
multiplied by the same local unit.  By \eqref{eq:schurdet} and
\eqref{eq:homdet}, the equation $g=0$ is also the equation $f=0$ up to a unit on
$z\ne0$.

Thus the completed local ring of the incidence scheme at $p$ is isomorphic to
\[
  \widehat\OO_{\PP^N,[z_0:x_0]}
  \Big/
  \big(f,\ q_1(\nabla f),\ldots,q_{N-2}(\nabla f),\ h\big),
\]
up to multiplication of the displayed equations by units.  The chosen
$q_j$ cut out the finite reduced polar set $P_q$ on $X$, and the slice $h$ meets
the corresponding cone line at exactly one point.  Hence this local ring is
zero-dimensional.  Since the schemes are of finite type over $\CC$, it is
Artinian and has finite length.
\end{proof}

\begin{remark}[No new degeneracy from coincident kernels]
\label{rem:nonewdegeneracy}
The coincidence of the left and right kernels in the symmetric model removes a
projective variable; it does not create an additional fiber.  Near a genuine
point the kernel line is unique and is given scheme-theoretically by the graph
$u=(-B^{-1}c,1)$.  Moreover
$(u^{\mathsf T}A_1u,\ldots,u^{\mathsf T}A_Nu)$ is a unit multiple of
$\nabla f$ on this graph.  Since $X$ is smooth, $\nabla f$ is not zero at a
genuine point.  Therefore the quadratic conormal forms cannot vanish
identically on a positive-dimensional branch through such a point.  Spurious
components elsewhere, including components over rank-drop loci or over $z=0$,
do not affect the local isolated contribution at the genuine polar points.
\end{remark}

\section{The multihomogeneous count}
\label{sec:bezout}

Let $H$ denote the hyperplane class on $\PP^N_{[z:x]}$ and $U$ the hyperplane
class on $\PP^{m-1}_{[u]}$.  The ambient product has dimension
\[
  N+(m-1)=N+m-1.
\]
The incidence system \eqref{eq:incidence} has
\[
  m+(N-2)+1=N+m-1
\]
equations, so it is square.

The $m$ equations $\Mcal u=0$ are linear in $[z:x]$ and linear in $[u]$, hence
have class $H+U$.  For a linear form
$q(Y)=\sum_i\lambda_iY_i$ on the dual coordinates,
\[
  q(u^{\mathsf T}A_1u,\ldots,u^{\mathsf T}A_Nu)
  =u^{\mathsf T}\left(\sum_i\lambda_iA_i\right)u.
\]
This is homogeneous of degree $2$ in $u$ and independent of $[z:x]$.  It is a
genuine section of $\OO(0,2)$ for every nonzero $q$ in the polar linear system.
Indeed, if $u^{\mathsf T}(\sum_i\lambda_iA_i)u$ vanished identically, then,
because the matrix $\sum_i\lambda_iA_i$ is symmetric and the characteristic is
zero, one would have $\sum_i\lambda_iA_i=0$.  Then
$\sum_i\lambda_i\partial_i f=0$ identically by differentiating the determinant.
After a linear change of the $x$-coordinates, $f$ would be independent of one
variable; since $f$ is homogeneous of degree at least two, the corresponding
coordinate point would be singular on $V(f)$, contradicting smoothness.  Hence
the polar equations have class $2U$.

There is no missing analogue of the non-symmetric redundant-left-equation
lemma.  In the ordinary determinant model one has both $\Mcal v=0$ and
$u^{\mathsf T}\Mcal=0$; one left equation is locally redundant after imposing
the right-kernel equations.  In the symmetric model there is only the single
kernel condition $\Mcal u=0$.  Locally, $m-1$ of these equations solve for the
affine kernel coordinates $u'$ and the remaining equation is the Schur
complement $g=0$.  Thus the factor is $(H+U)^m$, not $(H+U)^{m-1}$ and not a
product involving a second kernel variable.

The slice $h(z,x)=0$ has class $H$.  Therefore multihomogeneous Bezout gives
that the sum of isolated local solution multiplicities is at most
\begin{equation}
\label{eq:bezoutproduct}
  [H^N U^{m-1}]\,H(H+U)^m(2U)^{N-2}.
\end{equation}
This upper bound remains valid even if the full incidence scheme has positive-
dimensional excess components; isolated local intersection multiplicities are
bounded by the corresponding Chow-ring coefficient.

By Lemma~\ref{lem:isolated}, each of the $\pdeg_{\mathrm{top}}(X)$ genuine polar
points contributes an isolated incidence solution.  Therefore
\[
  \pdeg_{\mathrm{top}}(X)
  \le [H^N U^{m-1}]\,H(H+U)^m(2U)^{N-2}.
\]
It remains only to extract the coefficient:
\[
  (2U)^{N-2}=2^{N-2}U^{N-2}.
\]
Thus one needs the coefficient of $H^{N-1}U^{m-N+1}$ in $(H+U)^m$, namely
$\binom{m}{N-1}$.  Hence
\[
  [H^N U^{m-1}]\,H(H+U)^m(2U)^{N-2}
  =2^{N-2}\binom{m}{N-1}.
\]
This proves Theorem~\ref{thm:main}.

\section{Diagonal power sums}
\label{sec:fermat}

We first prove the arbitrary-degree diagonal corollary.  Let
\[
  F_{N,d}=x_1^d+\cdots+x_N^d,
  \qquad N\ge3,
  \qquad d\ge2.
\]
The hypersurface $X_{N,d}=V(F_{N,d})\subset\PP^{N-1}$ is smooth over $\CC$:
the partial derivatives are $d x_i^{d-1}$, and they have no common projective
zero.  Therefore
\[
  \pdeg_{\mathrm{top}}(X_{N,d})=d(d-1)^{N-2}.
\]
If $m=\sdc(F_{N,d})$, Theorem~\ref{thm:main} gives
\begin{equation}
\label{eq:diagonalineq}
  d(d-1)^{N-2}\le 2^{N-2}\binom{m}{N-1}.
\end{equation}
This is the first displayed assertion of Corollary~\ref{cor:diagonal-general}.
Using $\binom{m}{N-1}\le m^{N-1}/(N-1)!$, we obtain
\[
  m
  \ge
  \left(
  \frac{d(d-1)^{N-2}(N-1)!}{2^{N-2}}
  \right)^{1/(N-1)}.
\]
For the asymptotic form, Stirling's formula gives
\[
  ((N-1)!)^{1/(N-1)}=(1+o_N(1))\frac{N}{e}.
\]
The remaining degree factor satisfies, uniformly for $d\ge2$,
\[
  \big(d(d-1)^{N-2}\big)^{1/(N-1)}
  =(d-1)\left(\frac{d}{d-1}\right)^{1/(N-1)}
  =(1+o_N(1))(d-1),
\]
because $1\le(d/(d-1))^{1/(N-1)}\le2^{1/(N-1)}$.  Finally,
\[
  2^{(N-2)/(N-1)}=2+o_N(1).
\]
Thus
\[
  \sdc(F_{N,d})
  \ge \left(\frac{1}{2e}-o_N(1)\right)N(d-1),
\]
which proves Corollary~\ref{cor:diagonal-general}.

Taking $N=d=n$ and writing $F_n=F_{n,n}$ gives
\[
  n(n-1)^{n-2}\le 2^{n-2}\binom{\sdc(F_n)}{n-1},
\]
and hence
\[
  \sdc(F_n)\ge \left(\frac{1}{2e}-o(1)\right)n^2.
\]
This proves Corollary~\ref{cor:fermat}.

\subsection{Non-vacuity: an explicit symmetric representation}
\label{subsec:upper}

The lower bound is not merely a statement about an empty class of
representations.  We record a simple explicit construction showing
$\sdc(F_{N,d})=O(Nd)$ over $\CC$.

\begin{proposition}
\label{prop:upper}
For every $N,d\ge1$,
\[
  \sdc(F_{N,d})\le 2N(d+1)+1.
\]
In particular, for $F_n=F_{n,n}$,
\[
  \sdc(F_n)\le 2n^2+2n+1.
\]
\end{proposition}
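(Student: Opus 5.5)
The plan is to build an explicit symmetric matrix by first realizing $F_{N,d}$ as a small algebraic branching program (ABP), and then applying the symmetric ABP-to-determinant construction of Grenet--Kaltofen--Koiran--Portier. That construction turns a layered ABP with $n$ vertices into a symmetric determinantal representation of size at most $2n+1$ over any field of characteristic different from two, in particular over $\CC$. Since the statement only asks for $2N(d+1)+1$, I will aim for an ABP with at most $N(d+1)$ vertices. This is generous, so there is room to add padding vertices if parity or layering constraints in the symmetric construction require it.

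\textbf{Step 1 (the ABP).} Take a source $s$ and a sink $t$. For each $i$, attach a directed path from $s$ to $t$ of $d$ edges, each labelled $x_i$, passing through $d-1$ fresh internal vertices. The $s$--$t$ paths are exactly these $N$ chains, with weights $x_i^d$, so the ABP computes $\sum_i x_i^d$. It has $N(d-1)+2\le N(d+1)$ vertices. If the symmetric construction needs all $s$--$t$ paths to have a common length of fixed parity, that already holds here, since every chain has length $d$.

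\textbf{Step 2 (symmetrization).} Rather than quote the theorem as a black box, I would describe the matrix and verify it directly by expanding $\Det$ as a sum over cycle covers of the underlying weighted undirected graph. The construction doubles each internal vertex $v$ into a pair $(v,v')$ joined by a constant edge; these constant edges give $2$-cycles acting as a perfect matching. Each ABP edge $v\to w$ becomes an edge $v'\!-\!w$ carrying its label, and one extra vertex closes the loop from $t$ back to $s$. The claim to check is:
\begin{itemize}[label={}]
\item every cycle cover consists of exactly one long cycle following an $s$--$t$ path of the ABP, with all remaining vertices covered by the constant $2$-cycles;
\item every such cover contributes the path weight, times a global sign and a power of a constant.
\end{itemize}
Over $\CC$, the global sign and constant are removed by rescaling one row and the matching column by a square root. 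This rescaling is a congruence, so symmetry is preserved. Counting vertices gives a size of at most $2(N(d-1)+2)+1\le 2N(d+1)+1$.

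\textbf{Main obstacle.} The hard step is the cycle-cover bookkeeping, because symmetry creates extra covers that do not occur in the directed setting. Every undirected edge can be used as a $2$-cycle, and a long cycle can be traversed in either orientation, which doubles its contribution. I would handle this as follows:
\begin{itemize}[label={}]
\item Rule out spurious covers by a parity or bipartite argument: the doubled vertices alternate sides, so any cover not of the intended form leaves some vertex with no available partner.
\item Absorb the factor $2$ from the two orientations into the scaling constant. This is where characteristic zero (or at least characteristic $\neq 2$) is used.
\end{itemize}
Once every cover has been shown to contribute the same sign and the same constant multiple of its path weight, the proposition follows. Specializing to $N=d=n$ gives $\sdc(F_n)\le 2n^2+2n+1$.
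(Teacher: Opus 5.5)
Your proposal is correct, and it uses essentially the same gadget as the paper, but it verifies the gadget by a different method. The paper never expands the determinant into cycle covers. For each $i$ it takes $D_i=\begin{pmatrix}0&C_i^{\mathsf T}\\ C_i&0\end{pmatrix}$ with $C_i=I_{d+1}-x_iJ$. This is the in/out-doubling of a single path ABP of length $d$: the identity gives your matching edges and $-x_iJ$ gives the arcs. The paper then borders $D=\operatorname{diag}(D_1,\ldots,D_N)$ by one row and column $\lambda b$, which plays the role of your closing vertex. It reads off $\det S=-\lambda^2\det D\,b^{\mathsf T}D^{-1}b$ from the Schur complement, using $\det D=\pm1$ and the explicit inverse $C_i^{-1}=\sum_{k=0}^{d}x_i^kJ^k$. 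The factor $2$ in $b_i^{\mathsf T}D_i^{-1}b_i=2x_i^d$ is exactly your two orientations of the long cycle. What the paper gains is a proof with no combinatorics at all. What you gain is twofold: by sharing $s$ and $t$ across the $N$ chains you get the smaller size $2N(d-1)+5$ (strictly better than $2N(d+1)+1$ once $N\ge2$), and your argument works for any ABP.

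The price is the cycle-cover bookkeeping, and there your justification is too thin. Bipartiteness and parity only show two things: the extra vertex lies on an odd cycle whose remainder is an alternating path from $s$ to $t'$, and all other cycles are even. They do not force that path to follow arcs forward, nor the other cycles to be matching $2$-cycles. For example, if the ABP had arcs $u\to v\to u$, the $4$-cycle $u\,u'\,v\,v'$ would be a balanced spurious cycle.

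The missing lemma is that the doubled bipartite graph has the unique perfect matching $\{v,v'\}$. Here every ABP vertex is doubled, including $s$ and $t$, as your count $2(N(d-1)+2)+1$ indicates. The lemma follows from acyclicity by peeling vertices in topological order: each $v$ is adjacent only to $v'$ and to $u'$ for in-neighbours $u$, which are already used. Now write the long path as $a_1=s,a_2,\ldots,a_k=t'$ and pair it as $(a_1a_2)(a_3a_4)\cdots$. Add either the successor or the predecessor map on each remaining cycle. Each choice gives a perfect matching, which must therefore be the canonical one. This forces the path to be a forward $s$--$t$ path and every other cycle to be a $2$-cycle $\{v,v'\}$. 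With that lemma in place, and with uniform sign because every path has length $d$, your argument is complete.
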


\begin{proof}
Let $r=d+1$, and let $J$ be the $r\times r$ nilpotent matrix with ones on the
superdiagonal.  For each $i=1,\ldots,N$, set
\[
  C_i=I_r-x_iJ.
\]
Then $\det C_i=1$ and
\[
  C_i^{-1}=I_r+x_iJ+x_i^2J^2+\cdots+x_i^dJ^d.
\]
Thus
\[
  e_1^{\mathsf T}C_i^{-1}e_r=x_i^d.
\]
Define the symmetric $2r\times2r$ matrix
\[
  D_i=\begin{pmatrix}0&C_i^{\mathsf T}\\ C_i&0\end{pmatrix}.
\]
Then $\det D_i=(-1)^r$ and
\[
  D_i^{-1}=\begin{pmatrix}0&C_i^{-1}\\ (C_i^{\mathsf T})^{-1}&0\end{pmatrix}.
\]
With
\[
  b_i=\begin{pmatrix}e_1\\ e_r\end{pmatrix},
\]
one has
\[
  b_i^{\mathsf T}D_i^{-1}b_i
  =2e_1^{\mathsf T}C_i^{-1}e_r
  =2x_i^d.
\]
Let
\[
  D=\operatorname{diag}(D_1,\ldots,D_N),
  \qquad
  b=(b_1,\ldots,b_N)^{\mathsf T}.
\]
Then $D$ is symmetric affine-linear of size $2N(d+1)$,
$\det D=c\in\{\pm1\}$, and
\[
  b^{\mathsf T}D^{-1}b=2F_{N,d}.
\]
Choose $\lambda\in\CC$ satisfying $-2c\lambda^2=1$ and set
\[
  S=\begin{pmatrix}0&\lambda b^{\mathsf T}\\ \lambda b&D\end{pmatrix}.
\]
This is symmetric and affine-linear, of size $2N(d+1)+1$.  By the Schur
complement formula,
\[
  \det S
  =\det D\,\big(0-\lambda^2 b^{\mathsf T}D^{-1}b\big)
  =c(-2\lambda^2F_{N,d})=F_{N,d}.
\]
For $N=d=n$ this gives the displayed size $2n^2+2n+1$.
\end{proof}

\begin{remark}[Small-size verification of the explicit construction]
\label{rem:smallchecks}
The construction in Proposition~\ref{prop:upper} was independently expanded for
$n=2$ and $n=3$ in the specialization $F_n=F_{n,n}$.  In both cases one obtains
$b^{\mathsf T}D^{-1}b=2F_n$ and $\det S=F_n$ exactly, with sizes $13$ and $25$,
respectively.  This check is not used in the proof, but it verifies the concrete
non-vacuity construction in the first two nontrivial cases a reader is likely to
test.
\end{remark}

\section{Scope and limitations}
\label{sec:scope}

\begin{remark}[Positive characteristic]
\label{rem:positivechar}
The theorem is stated over $\CC$ for a reason.  It is false as a uniform theorem
over all fields of characteristic different from two.  If $\operatorname{char} K
=p>2$ and $n=p^r$, then the Frobenius identity gives
\[
  \sum_{i=1}^n x_i^n=(x_1+\cdots+x_n)^n.
\]
Hence
\[
  F_n=\Det\big((x_1+\cdots+x_n)I_n\big),
\]
so $\sdc(F_n)\le n$.  Also the hypersurface $V(F_n)$ is not smooth in that case.
Thus the smooth characteristic-zero hypothesis is essential for the stated
asymptotic lower bound.
\end{remark}

\begin{remark}[Exact, not border]
The argument proves an exact symmetric determinantal-complexity lower bound.  It
uses isolated solutions in a fixed incidence scheme associated with an exact
representation.  It does not prove a border symmetric determinantal-complexity
lower bound, because polar degree is not a closed invariant in arbitrary
degenerating families and the isolated local contributions can degenerate into
excess components.
\end{remark}

\begin{remark}[Relation to the non-symmetric companion preprint]
The companion preprint~\cite{SheshadriArxiv} proves, for ordinary
determinantal complexity,
\[
  \pdeg_{\mathrm{top}}(X)\le
  \sum_{a=0}^{N-2}\binom{N-2}{a}\binom{m}{N-1-a}\binom{m-1}{a}.
\]
For $F_n$, the leading coefficient of this ordinary determinant bound yields the
constant $1/(4e)$.  The symmetric incidence has only one kernel variable, so the
polar equations have class $2U$ rather than $U+V$, and the coefficient becomes
$2^{N-2}\binom{m}{N-1}$.  This is exactly the source of the constant $1/(2e)$.
\end{remark}

\section{The human--AI methodology}
\label{sec:methodology}

We describe the process honestly, both because it produced the result and
because it is itself of methodological interest.

\paragraph{Roles and models.}
A human author acted as orchestrator and domain arbiter.  Large language models
were used in separated roles: a generator proposing variants of the
polar-degree method, an adversarial critic attempting to falsify the proposed
symmetric specialization, and a verifier checking the local algebra,
multidegrees, coefficient extraction, asymptotic constant, and non-vacuity.
Convergence of independent model outputs was treated only as a signal to attempt
a human-checkable proof, not as evidence.

\paragraph{Trajectory.}
The starting point was the non-symmetric polar-degree preprint~\cite{SheshadriArxiv}.
A speculative exploration suggested that the symmetric determinant model should
have a sharper incidence count because, at a rank-$(m-1)$ symmetric matrix, the
left and right kernel lines coincide.  The present proof, however, is written so
that the symmetric local algebra and the resulting Bezout bound can be checked
without importing any lemma from that preprint.  A subsequent adversarial prompt forced the symmetric isolatedness step to be
checked from scratch: the possible failure modes were rank-drop escape,
positive-dimensional spurious components through a genuine polar point,
nonreduced scheme structure, incorrect multidegree $2U$, an erroneous Bezout
coefficient, and vacuity of the model.  The resolution is the symmetric local
normal form of Lemma~\ref{lem:isolated}; it proves that the incidence is locally
a graph and that the lifted quadratic conormal equations are the ordinary polar
equations up to units.

\paragraph{Reproducibility.}
Appendix~\ref{app:prompt} records the load-bearing verification prompt in
ASCII-normalized form.  The proof in Sections~\ref{sec:prelim}--\ref{sec:fermat}
is intended to be checked independently of its provenance.  No symbolic
computation is used as an assumption in the proof.  The direct $n=2$ and $n=3$
checks of the explicit non-vacuity construction are recorded as sanity checks in
Remark~\ref{rem:smallchecks}.

\section*{Acknowledgement of AI assistance}

The mathematical content of this paper was generated, challenged, and rewritten
through large language models operated by the human author in the adversarial
multi-model protocol described above.  The author selected the problem,
directed the verification prompts, chose the final framing, and takes
responsibility for the text.  The proofs are included in full so that the
result can be judged by ordinary mathematical standards without reference to the
models that helped produce it.

\appendix

\section{Verification prompt}
\label{app:prompt}

The following prompt was used to force a referee-level settlement of the
symmetric specialization.  It is reproduced in ASCII-normalized form and lightly
line-wrapped; mathematical symbols such as ``not equal'' and ``Omega'' were
normalized to avoid LaTeX encoding issues.

\begin{verbatim}
Establish or refute, to journal-referee rigor, the following claimed theorem in
algebraic complexity, working it out fully yourself rather than deferring to any
prior derivation.

CLAIM. Define symmetric determinantal complexity sdc(f) as the least m such that
f = det(M(x)) for a SYMMETRIC m x m matrix M of affine-linear forms. Then for
F_n = sum_i x_i^n,
    sdc(F_n) >= (1/2e - o(1)) n^2,
via the bound delta_top(X) <= 2^{N-2} C(m, N-1) for any size-m symmetric
determinantal representation of a smooth degree-d hypersurface X = V(f) in
P^{N-1}, where delta_top = d(d-1)^{N-2}.

The intended mechanism: at a smooth point the symmetric M has rank m-1 with
COINCIDING left/right kernel, so the incidence lives in P^N_[z:x] x P^{m-1}_[u]
with one kernel variable; the conormal identity is [partial_i f] =
[u^T A_i u]; the polar equations q_j(u^T A_i u)=0 have class 2U;
multihomogeneous Bezout then gives
    2^{N-2} C(m, N-1) = [H^N U^{m-1}] H (H+U)^m (2U)^{N-2}.

Do NOT assume the count is valid. The entire claim rests on a single
load-bearing step, and your job is to settle that step, not to restate the
mechanism. Treat a confirmation and a refutation with equal suspicion; prior
versions of the analogous (non-symmetric) step were internally convincing and
WRONG.

Settle each of the following with a definite verdict -- HOLDS (full proof),
FAILS (explicit verified counterexample), or UNRESOLVED (state exactly what is
missing). Do not round UNRESOLVED toward either side.

1. SYMMETRIC ISOLATEDNESS. Does each genuine polar point lift to a
   ZERO-DIMENSIONAL, scheme-theoretically isolated solution of the symmetric
   incidence system {M(x)u=0, q_j(u^T A_i u)=0, slice}? Prove it via the
   symmetric local normal form: with M = [[B,c],[c^T,s]], det B a unit near the
   point, kernel u=(-B^{-1}c, 1), and det M = (det B)(s - c^T B^{-1} c).
   Verify that the Schur complement identification goes through WITH THE
   SYMMETRY CONSTRAINT -- i.e. that restricting to symmetric M does not destroy
   the graph structure, the unit scaling adj M = alpha u u^T, or the
   elimination to the polar slice. State explicitly whether coincidence of the
   two kernels introduces any NEW degeneracy (e.g. u^T A_i u vanishing
   identically on a positive-dimensional locus) absent in the non-symmetric
   case.

2. THE MULTIDEGREE 2U. Confirm q_j(u^T A_i u) is genuinely class 2U and that
   the symmetric incidence is a SQUARE system on P^N x P^{m-1} (count equations
   vs dimensions). Verify no analogue of the redundant-left-equation reduction
   is silently needed or silently missing.

3. THE BEZOUT EXTRACTION. Verify
   [H^N U^{m-1}] H (H+U)^m (2U)^{N-2} = 2^{N-2} C(m, N-1) exactly, and that the
   root extraction yields the constant 1/(2e), not 1/(4e) or something else.

4. EXISTENCE / NONDEGENERACY. Does a smooth symmetric determinantal
   representation of F_n even exist for the relevant m, and is the generic
   polar section reduced of the right cardinality in the symmetric setting? If
   symmetric representations of F_n are obstructed, the theorem is vacuous or
   false -- check this.

Close with: (i) overall verdict on the n^2 bound and the 1/(2e) constant; (ii)
the single weakest point; (iii) an explicit list of anything asserted but not
fully verified.
\end{verbatim}

\section{Checklist of load-bearing verifications}
\label{app:checklist}

For convenience, we spell out where each possible failure mode is addressed.

\begin{enumerate}[leftmargin=2em]
\item Rank-drop escape cannot pass through a genuine point because the local
normal form is taken on the open set $z\ne0$ and $\det B\ne0$, and the local
incidence is a graph over $g=0$; see Lemma~\ref{lem:isolated}.
\item A positive-dimensional spurious component cannot pass through a genuine
point because on the local graph the lifted polar equations are ordinary polar
equations multiplied by units; see \eqref{eq:unitpolar}.
\item Nonreduced structure is controlled because the completed local incidence
ring is Artinian.  Nonreduced isolated multiplicity is allowed and is exactly
what Bezout counts.
\item The polar equations have class $2U$ because $u^{\mathsf T}A_i u$ is a
quadratic form in the single projective kernel variable.  No left-kernel
reduction is present in the symmetric system.
\item The coefficient extraction is the elementary identity
\[
  [H^N U^{m-1}]H(H+U)^m(2U)^{N-2}=2^{N-2}\binom{m}{N-1}.
\]
\item The model is nonempty for $F_n$ by Proposition~\ref{prop:upper}.
\end{enumerate}


\begin{thebibliography}{99}

\bibitem{AlperBogartVelasco17}
J.~Alper, T.~Bogart, and M.~Velasco.
\newblock A lower bound for the determinantal complexity of a hypersurface.
\newblock \emph{Found. Comput. Math.} 17(3):829--836, 2017.

\bibitem{CaiChenLi10}
J.-Y.~Cai, X.~Chen, and D.~Li.
\newblock Quadratic lower bound for permanent vs. determinant in any
characteristic.
\newblock \emph{Comput. Complexity} 19(1):37--56, 2010.

\bibitem{Fulton}
W.~Fulton.
\newblock \emph{Intersection Theory}, 2nd ed.
\newblock Springer, 1998.

\bibitem{GH}
P.~Griffiths and J.~Harris.
\newblock Algebraic geometry and local differential geometry.
\newblock \emph{Ann. Sci. \'Ec. Norm. Sup\'er.} 12(3):355--452, 1979.

\bibitem{KumarVolk21}
M.~Kumar and B.~L.~Volk.
\newblock A lower bound on determinantal complexity.
\newblock In \emph{36th Computational Complexity Conference (CCC 2021)},
LIPIcs 200, article 4.  Full version: \emph{arXiv:2009.02452}.

\bibitem{MignonRessayre04}
T.~Mignon and N.~Ressayre.
\newblock A quadratic bound for the determinant and permanent problem.
\newblock \emph{Int. Math. Res. Not.} 2004(79):4241--4253, 2004.

\bibitem{Piene78}
R.~Piene.
\newblock Polar classes of singular varieties.
\newblock \emph{Ann. Sci. \'Ec. Norm. Sup\'er.} 11(2):247--276, 1978.

\bibitem{SheshadriArxiv}
K.~Sheshadri.
\newblock A near-quadratic lower bound on the determinantal complexity of
$\sum_i x_i^n$ via the polar degree of the tangent cone.
\newblock arXiv:7680505, 2026.

\bibitem{Valiant79}
L.~G.~Valiant.
\newblock Completeness classes in algebra.
\newblock In \emph{Proc. 11th ACM Symposium on Theory of Computing},
pp.~249--261, 1979.

\bibitem{Yabe15}
A.~Yabe.
\newblock Bi-polynomial rank and determinantal complexity.
\newblock \emph{arXiv:1504.00151}, 2015.

\end{thebibliography}
\end{document}